\newcommand{\approach}[1]{approach}
\newtheorem{theorem}{Proposition}[section]
\patchcmd{\maketitle}{\@copyrightspace}{}{}{}
\title{Understanding Semantics from Speech Through Pre-training}
\author{
\Large \textbf{Pengwei Wang, Liangchen Wei, Yong Cao, Jinghui Xie, Yuji Cao, Zaiqing Nie} \\
Alibaba AI Labs \\
hoverwang.wpw, liangchen.wlc, yohncao.cy, jinghui.xjh, yuji.cyj, zaiqing.nzq@alibaba-inc.com
}
\begin{document}
 
\maketitle

\begin{abstract}
End-to-end Spoken Language Understanding (SLU) is proposed to infer the semantic meaning directly from audio features without intermediate text representation. 
Although the acoustic model component of an end-to-end SLU system can be pre-trained with Automatic Speech Recognition (ASR) targets, the SLU component can only learn semantic features from limited task-specific training data.
In this paper, for the first time we propose to do large-scale unsupervised pre-training for the SLU component of an end-to-end SLU system, so that the SLU component may preserve semantic features from massive unlabeled audio data.
As the output of the acoustic model component, i.e. phoneme posterior sequences, has much different characteristic from text sequences, we propose a novel pre-training model called BERT-PLM, which stands for \textbf{B}idirectional \textbf{E}ncoder \textbf{R}epresentations from\textbf{ T}ransformers through \textbf{P}ermutation \textbf{L}anguage \textbf{M}odeling.
BERT-PLM trains the SLU component on unlabeled data through a regression objective equivalent to the partial permutation language modeling objective, while leverages full bi-directional context information with BERT networks.
The experiment results show that our approach out-perform the state-of-the-art end-to-end systems with over 12.5\% error reduction. 
\end{abstract}

\section{Introduction}
Spoken Language Understanding (SLU) has attracted very much attention in recent years with the rapidly growing demand of voice interface applications and devices such as Siri, Cortana, Alexa and Google Home etc.  \citep{Bhargav:2013,Ravuri:2015,Sarikaya:2014,Tur:2011,Tur:2012,Xu:2014,Yao:2014,Siddhant:2019,Zhao:2019}. 
The conventional SLU approaches typically consist of two parts: automatic speech recognition (ASR), which convert audio into the underlying text, and natural language understanding (NLU), which takes the converted text as input \citep{Coucke:2018,Gorin:1997,Mesnil:2015}.
The major drawback of such approaches is that the NLU part suffer from the ASR errors, which set an accuracy upper bound of the entire system.
End-to-end SLU approaches, which directly infer semantic meaning from audio feature, are proposed to eliminate the above error propagation issue \citep{Chen:2018,Haghani:2018,Serdyuk:2019,Lugosch1:2019}

An end-to-end SLU system usually consists of two parts: the acoustic model component and the SLU component.
Existing end-to-end SLU approaches generally require much more training data, comparing to text based SLU, to make the acoustic model component converge.
Researchers use ASR targets, words and phonemes, to pre-train the acoustic model component in the end-to-end model, while the pre-trained acoustic model are connected to the SLU component with parameter weights either frozen or not \citep{Lugosch1:2019}.
As the acoustic model component does not capture any semantic features but the acoustic features only, acoustic pre-training may use any ASR data outside the downstream SLU task domains. 
Moreover, the acoustic model itself may also benefit from unsupervised pre-training via sequence auto-encoder \citep{Qian:2017} and noise contrastive binary classification \citep{Schneider:2019}.
However, the SLU component in the end-to-end model, which supposes to learn semantic features, still suffers from the limited task specific training data. 

Large-scale unsupervised pre-training via auto-regressive and auto-encoding objectives have been proven effective in text processing. \citep{Radford:2018,Devlin:2019,Yang:2019} 
Semantic features are preserved during the pre-training on massive text data to help the downstream text processing tasks through weight fine-tuning.
In end-to-end SLU approaches, although there is no text representations, the acoustic model output acts as an intermediate representation that holds semantics.
It's very promising to improve the end-to-end system accuracy if we can introduce semantic feature pre-training with proper objectives over massive unlabeled audio data.

In this paper, for the first time we propose a large-scale unsupervised pre-training approach for the SLU component in end-to-end SLU systems.
The acoustic model component is trained with phoneme and text based Connectionist Temporal Classification (CTC) loss or alignment loss(input sequences are labeled at frame level), following \citet{Lugosch1:2019}.
The phoneme posterior output of the acoustic model component is fed into the SLU component.
The phoneme posterior output can be regarded as a special language representation.
Pre-training the SLU component with phoneme posterior input allows the model to capture semantic features from massive unlabeled data. 
Different from text sequence, the phoneme posterior output from the acoustic model component consists of phoneme distributions over sampled time slices. 
The sampling strategy in \citep{Devlin:2019}, which is designed to handle text unigram, can hardly be implemented over the phoneme posterior output.
We employ the partial permutation language model as pre-training objective which is proposed by \citet{Yang:2019}
However, the XLNET implementation that they carry out can only leverage incomplete sequence information at each time point.
The attention masking strategy in XLNET stops it from looking at the right part of the permuted sequence, although XLNET is possible to leverage partial context from both sides of the time point in the original sequence because of permutation.
It's generally required to ``see'' the whole sequence context at each time point so that the learned representations are built up with full context information.
Moreover, the existing of a special phoneme ``SIL'', which stands for a silence time slice, corrupts the permutation language modeling objective since predicting a silence time slice does not make sense at all just like predicting a white space from an English sentence. 
Therefore, we carry out a novel pre-training model called BERT-PLM, which employs an regression objective equivalent to the partial permutation language model objective while eliminating the influence of ``SIL'' and preserving full bi-directional context information with BERT networks.

We train BERT-PLM over more than 4000 hours audio data from multiple sources.
The pre-trained model is further fine-tuned on two SLU tasks: Fluent Speech Command dataset \citep{Lugosch1:2019} and an In-House speech command dataset.
The acoustic model component weights are pre-trained and frozen during the SLU component fine-tuning. 
On all datasets, our approach significantly outperform the state-of-the-art end-to-end approaches, which do not employ unsupervised pre-training. We summarise our contributions as follows:
\begin{itemize}
\item We are the first to propose large-scale unsupervised pre-training for end-to-end SLU systems.
\item We propose a novel pre-training method BERT-PLM, which is learned through partial permutation language model objective with full context information.
\item We conduct empirical studies to show the effectiveness of our large-scale unsupervised pre-training approach.
\end{itemize}

The result of the paper is organized as follows. In section 2, we introduce related work; in section 3, we present the end-to-end model, its acoustic model component pre-training strategy and its SLU component pre-training strategy; in section 4, we show the empirical study results, while in the last section we conclude our work.

\section{Related Work}
\label{sec:related_work}
In this section, we briefly review the work that is related to this paper in two subsections: end-to-end SLU models and pre-training approaches.

\subsection{End-to-End SLU Models}
Researchers have been working on end-to-end SLU models since two years ago.
\citet{Qian:2017} employ single direction RNNs to encode input audios into  distributed latent vectors, which are further connected with semantic utterance classification layers. 
They also use a sequence auto-encoder model to pre-train the acoustic RNNs encoder to capture better distributed latent representations. 
However, the proposed end-to-end SLU model is finally used to build ensembles with the ASR+NLU models since its accuracy has an obvious gap from the ASR+NLU models.
\citet{Serdyuk:2019} reinforce the audio encoder part with multi-layer bi-directional RNNs together with a sub-sampling strategy, which makes their end-to-end model perform very close to the conventional ASR+NLU model.
\citet{Chen:2018} introduce an end-to-end model, where the acoustic model component and the SLU component are connected through the softmax probabilities over graphemes output by the acoustic model component, so that the acoustic model component can be pre-trained with ASR targets. 
The proposed end-to-end model outperform the conventional ASR+NLU model.
Pre-training acoustic model component with ASR targets is also employed in \citet{Lugosch1:2019} where the pre-training targets are phonemes and words.

\subsection{Pre-training Approaches}
Pre-training has been proven effective for neural networks as it allows the target function to be set at a better start position for further optimizing. 
Existing end-to-end models employ pre-training strategies in different ways.
\citet{Chen:2018} and \citet{Lugosch1:2019} use ASR targets for supervised pre-training of the acoustic model components.
\citet{Qian:2017} use sequence auto-encoder to do unsupervised pre-training on the acoustic model components, while \citet{Schneider:2019} employs a contrastive loss that requires distinguishing a true future audio sample from negatives.
However, pre-training is only applied to the acoustic model components in existing end-to-end SLU approaches.

Pre-training has attracted very much attention in the NLU domain.
Large-scale unsupervised pre-training approaches have been proven effective in almost all NLU tasks. \citep{peters2018deep,Devlin:2019,sun2019ernie,Radford:2018,Yang:2019}
\citet{peters2018deep} introduce ELMo, a deep bi-directional contextual word representation based on RNNs.
\citet{Radford:2018} propose GPT and GPT2, which pre-train the Transformer decoder structure via language modeling.
However, only partial contextual information is observed in ELMo, GPT and GPT2, either from left or right, at each word position.
\citet{Devlin:2019} come up with BERT, where the full contextual information from both sides of a word are used to build its contextual representations.
\citet{sun2019ernie} propose ERNIE to further refine the masked language modeling loss of BERT to leverage the entity information in a sequence.
However, the masked language modeling loss brings an placeholder token ``[MASK]'' to the vocabulary which is unseen in the downstream tasks.
\citep{Yang:2019} carry out XLNET, which employs the permutation language modeling loss to overcome the unseen token problem.
Although XLNET tries to leverage context from the both sides of a word by introducing permutation, its network implementation can only leverage incomplete context, only the left part of the word position in the permuted sequence.

Inspired by the great success of large-scale unsupervised pre-training approaches in NLU, we come up with the idea of conducting large-scale unsupervised pre-training on the SLU component of end-to-end SLU models. The approach details are presented in the following sections.

\section{Models}
\label{sec:model}

\begin{figure*}[bt!]
	\centering
	\includegraphics[scale=0.65]{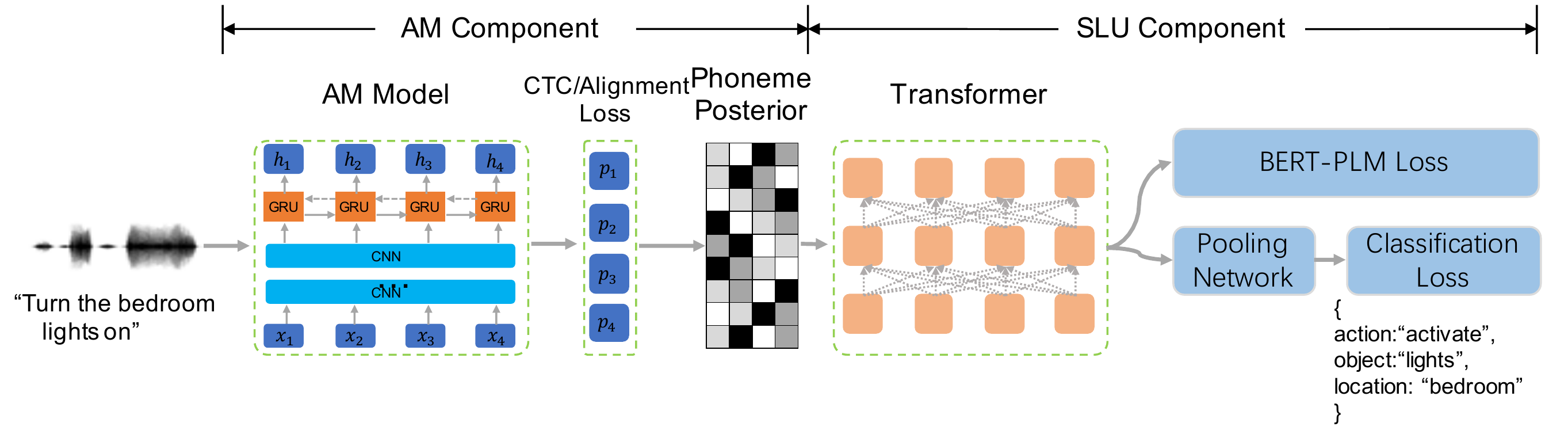}
	\caption{
		\label{fig:whole_arc} End-to-end SLU model with SLU pre-training}
\end{figure*}

The overall architecture of our end-to-end SLU model is shown in Figure~\ref{fig:whole_arc}. 
There are two major components: the acoustic model (AM) component and the SLU component, where the SLU component are connected through the posterior probabilities over phonemes output by the acoustic model component.
In the following of this section, we present the model details and the pre-training strategies.

\subsection{Acoustic Model Component}
The SLU component takes the phoneme posterior distribution output of the acoustic model component as its input.
Such implementation makes the SLU component independent from the acoustic model implementation conditioned on the phoneme posterior. 
Therefore, the SLU component is less sensitive to the choice of the acoustic model implementation as long as it provides proper phoneme posterior with comparable accuracy.
In this paper, we try two different acoustic model implementations with same SLU networks. 

\subsubsection{SincNet}
We implement an acoustic model using a SincNet layer \citep{ravanelli2018speaker,ravanelli2018interpretable} to process raw audio signals, followed by multiple convolution and bi-directional GRU layers with pooling and dropout. 
The outputs include the phoneme-level logits and the word-level logits as well.
The SincNet based acoustic model is pre-trained with frame-aligned ASR training data following \citet{Lugosch1:2019}.

\subsubsection{DFSMN}
We also employ a DFSMN\citep{zhang2018deep} acoustic model to test the robustness of the SLU component pre-training with different acoustic model implementation.
The DFSMN model is pre-trained with CTC loss using an in-house labeled ASR dataset.
In Figure~\ref{fig:whole_arc}, we illustrate the acoustic model component with the SincNet implementation only due to the limited space.

\subsection{SLU Component}
In our end-to-end model, we employ Transformer encoder network as the SLU component, since it has been proven effective in almost all NLU tasks \citep{Devlin:2019}.
Since its input is the phoneme posterior output of the acoustic model component $\bm{h}^{phoneme}$, the token embedding input of the Transformer encoder network is calculated as follows.
\begin{equation}
\label{token_embedding}
    \bm{v}^{token} = \bm{E}^{phoneme}\cdot\bm{h}^{phoneme}
\end{equation}
As $\bm{h}^{phoneme}$ represents a probability distribution, its elements are positive real numbers with summation being one.
The token embedding input can be regarded as a weighted pooling of the phoneme embedding $\bm{E}^{phoneme}$ over the phoneme vocabulary with respect to the attention weights holding in $\bm{h}^{phoneme}$.
In each Transformer block, we employ relative positional encoding, as depicted in \citet{Dai:2019}, for pre-training purpose.
The output of the last Transformer block is fed into a pooling layer where the input audio is finally converted into a vector representation for further SLU classification tasks.

\subsection{SLU Component Pre-training}
Since the acoustic model component is pre-trained with ASR targets as mentioned above.
Each phoneme posterior vector $\bm{h}^{phoneme}$ comes from a sampled audio slice over a certain period of time, typically 10$\thicksim$30 milliseconds, so that it carries very little semantic information.
Moreover, the $\bm{h}^{phoneme}$ may be rather noisy due to the audio background noise, unusual pronounced tones and in-perfectly learned acoustic model.
It requires massive task specific training data for the SLU component to capture the underlying semantics with such noisy representations.
As we can use the learned acoustic model component to generate the phoneme posterior representations for any input audio, we employ an unsupervised pre-training approach.

\subsubsection{Permutation Language Modeling}
Although a phoneme posterior sequence $\bm{H}^{phoneme}$ is different from natural language token sequences, it is still a special language representation sequence.
The unsupervised pre-training approaches for natural language token sequences can also be employed for phoneme posterior sequences as most of them are language independent.
\citet{Devlin:2019} use a masked language model target to train BERT on massive unlabeled natural language corpus. 
However the masked language model requires a special sampling strategy designed for unigram sequences. 
It's hard to to applied on phoneme posterior sequences where there is a distribution vector at each time point.
Inspired by \citet{Yang:2019}, we employ the permutation language model target proposed in their work for the SLU component pre-training.

We use the permutation language model with partial prediction targets as proposed by \citet{Yang:2019}.
\begin{equation}
\label{partial_plm}
    \log p_\theta(\mathbf{x}_{\mathbf{z}_{>c}}|\mathbf{x}_{\mathbf{z}_{\leq c}}) = \sum_{t=c+1}^{|\mathbf{z}|}{\log p_\theta(x_{z_t}|\mathbf{x}_{\mathbf{z}_{<t}})}
\end{equation}
where $\mathbf{z}$ represents a given permutation of the target sequence $\mathbf{x}$, $\mathbf{z}_{<t}$ stands for the left part of the time point $t$ in the permutation, and $c$ is the cutting point where the right part $\mathbf{z}_{>c}$ are the target subsequence.
The objective is to maximize the log-likelihood of the target subsequence conditioned on the non-target subsequence, that is
\begin{equation}
\label{partial_plm_loss}
\begin{aligned}
    &\max \mathbb{E}_{\mathbf{z}\sim Z_T}\big{[} \log p_\theta(\mathbf{x}_{\mathbf{z}_{>c}}|\mathbf{x}_{\mathbf{z}_{\leq c}})\big{]} \\
    &= \mathbb{E}_{\mathbf{z}\sim Z_T}\big{[} \sum_{t=c+1}^{|\mathbf{z}|}{\log p_\theta(x_{z_t}|\mathbf{x}_{\mathbf{z}_{<t}})} \big{]}
\end{aligned}
\end{equation}

\subsubsection{BERT-PLM}
The major challenge for the permutation language model implementation, like XLNET, is the sequential prediction issue.
When predicting the target $x_{z_t}$, the model is forbidden to look at the subsequence $\mathbf{x}_{\mathbf{z}>t}$.
As a result, the subsequence $\mathbf{x}_{\mathbf{z}>t}$ is masked when modeling $\mathbf{x}_{\mathbf{z} \leq t}$ at time point $t$ \citep{Yang:2019}.
The major drawback of such implementation is that the model can never see the full context at each time point although introducing permutation enables the model to look at the context from both sides in the original sequence. 

The root cause of this situation is the common sense of how to calculate the objective in Eq.\ref{partial_plm_loss}, which can be factorized as follows.
\begin{equation}
\label{partial_plm_loss_factors}
\begin{aligned}
    &\mathbb{E}_{\mathbf{z}\sim Z_T}\big{[} \log p_\theta(\mathbf{x}_{\mathbf{z}_{>c}}|\mathbf{x}_{\mathbf{z}_{\leq c}})\big{]} \\
    &= \frac{1}{T!} \sum_{\mathbf{z}\in Z_T}
    \sum_{t=c+1}^{|\mathbf{z}|}{\log p_\theta(x_{z_t}|\mathbf{x}_{\mathbf{z}_{<t}})} 
\end{aligned}
\end{equation}
It's natural to calculate along the summation order, in which we sample a permutation first, then try to calculate the rest summation from $c$ to $T$.
In order to calculate the summation from $c$ to $T$ in parallel, we have to choose incomplete context masking, as proposed by \citet{Yang:2019}, and sacrifice model representation capability.
We can sort out the situation by changing the calculation order of the two summations in Eq.\ref{partial_plm_loss_factors}.

\begin{theorem}
\label{equality_prop}
Given a sequence $x$ with length $T$, the partial language modeling expectation with cutting point $c\in(0,T)$ over sequence permutations $Z_T$ equals to the partial language regression expectation over the context subsequence $\hat{\mathbf{x}}$ combinations, i.e.
\begin{equation}
\label{bert_plm_loss}
\begin{aligned}
    &\mathbb{E}_{\mathbf{z}\sim Z_T}\big{[} \log p_\theta(\mathbf{x}_{\mathbf{z}_{>c}}|\mathbf{x}_{\mathbf{z}_{\leq c}})\big{]} \\
    &=\mathbb{E}_{|\overline{\mathbf{x}}|\sim [1, T-c]} \left[\log p_\theta(\overline{\mathbf{x}}|\hat{\mathbf{x}}) \right]
\end{aligned}
\end{equation}
\end{theorem}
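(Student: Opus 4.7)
The plan is to start from the chain-rule form on the right-hand side of Eq.~\ref{partial_plm_loss_factors}, swap the order of the two finite summations so the sum over $t$ becomes the outermost, and then collapse redundant permutations into set-based expectations. The interchange yields, for each $t\in[c+1,T]$, a single inner term $\tfrac{1}{T!}\sum_{\mathbf{z}\in Z_T}\log p_\theta(x_{z_t}\mid\mathbf{x}_{\mathbf{z}_{<t}})$, which is the object I will simplify combinatorially.

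Next I would observe that for fixed $t$ the summand depends on $\mathbf{z}$ only through the pair $(z_t,\{z_1,\ldots,z_{t-1}\})$: the ordering within the context is irrelevant because the original positions of the tokens are carried by the token embeddings themselves, and the suffix $\{z_{t+1},\ldots,z_T\}$ never appears in the conditional. A direct count then shows that, for each admissible pair $(i,S)$ with $i\notin S$ and $|S|=t-1$, exactly $(t-1)!\,(T-t)!$ permutations realise it. Dividing by $T!$ matches the uniform weight $1/(T\binom{T-1}{t-1})$, so the inner expression becomes the uniform expectation $\mathbb{E}_{(i,S)}[\log p_\theta(x_i\mid \mathbf{x}_S)]$ over context/target set pairs of size $t-1$ and $1$, respectively.

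The final step is to re-parametrise the outer index by the target-step $k=t-c\in[1,T-c]$, so that each term of the outer sum is a single-token regression of $\overline{\mathbf{x}}=x_i$ onto a bi-directional context $\hat{\mathbf{x}}=\mathbf{x}_S$ of size $c+k-1$ drawn uniformly from the admissible subsets. Packaging the outer sum as an expectation with $|\overline{\mathbf{x}}|$ (interpreted as the target-step index) drawn from $[1,T-c]$ and the inner draw over compatible $(\overline{\mathbf{x}},\hat{\mathbf{x}})$ pairs then reproduces Eq.~\ref{bert_plm_loss} term by term, matching the right-hand side as a sum of BERT-style masked regressions rather than the sequentially masked predictions used by XLNET.

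The main obstacle I anticipate is not the combinatorics, which is routine, but the notational reconciliation on the right-hand side: pinning down what the outer draw over $|\overline{\mathbf{x}}|$ and the inner draw over $\hat{\mathbf{x}}$ formally denote, and verifying that no spurious factor of $T-c$ sneaks in when the sum over $k$ is expressed as an expectation. Once those conventions are fixed, the proof reduces to the three routine steps above, and the conceptual payoff is that each summand in the regression formulation conditions on a \emph{complete} context set, so full bi-directional attention becomes compatible with the partial-PLM objective --- exactly the property that motivates BERT-PLM.
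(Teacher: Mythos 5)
Your proposal follows essentially the same route as the paper's own proof: exchange the two finite sums in Eq.~\ref{partial_plm_loss_factors}, collapse permutations into combinations by exploiting the order-irrelevance of the context under relative positional encoding, count the collapse multiplicity, and repackage the result as an expectation. Your combinatorics at the middle step is correct: for fixed $t$, the summand depends on $\mathbf{z}$ only through the pair $(z_t,\{z_1,\ldots,z_{t-1}\})$, each admissible pair $(i,S)$ with $|S|=t-1$, $i\notin S$ is realized by exactly $(t-1)!\,(T-t)!$ permutations, and dividing by $T!$ gives the uniform law over such pairs, so the $t$-th term becomes $\mathbb{E}_{(i,S)}\bigl[\log p_\theta(x_i\mid\mathbf{x}_S)\bigr]$, a \emph{single-token} prediction averaged over both the target and the context set.

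The gap is in your final step, and it is exactly the reconciliation you flagged as "notational". In the proposition, $\overline{\mathbf{x}}$ is not a single token and $|\overline{\mathbf{x}}|$ is not a step index: the paper defines $\overline{\mathbf{x}}$ as the combination of \emph{all} target elements (the entire complement of $\hat{\mathbf{x}}$, of size $T-t+1$), predicted independently, so that $\log p_\theta(\overline{\mathbf{x}}\mid\hat{\mathbf{x}})=\sum_{j\notin S}\log p_\theta(x_j\mid\mathbf{x}_S)$. Your (correct) per-$t$ term is $\tfrac{1}{(T-t+1)\binom{T}{t-1}}\sum_{S}\sum_{j\notin S}\log p_\theta(x_j\mid\mathbf{x}_S)$, while the right-hand side of Eq.~\ref{bert_plm_loss} expands per $t$ to $\tfrac{1}{\binom{T}{t-1}}\sum_{S}\sum_{j\notin S}\log p_\theta(x_j\mid\mathbf{x}_S)$; these differ by the factor $T-t+1$ for every $t<T$, so the claimed "term by term" match does not hold, and reinterpreting $\overline{\mathbf{x}}$ as one token does not repair it --- it changes the statement. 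It is worth knowing that your count is the rigorous one: the paper's proof obtains the clean form by using the coefficient $(T-t+1)!\,(t-1)!/T!$, i.e.\ it counts orderings of target-plus-suffix even though $z_t$ is already pinned to a specific element, and that miscount supplies precisely the missing factor $T-t+1$ (both proofs also silently drop the $1/(T-c)$ normalization when rebadging the outer sum as an expectation). So with careful counting, the equality as stated holds only if the regression side carries per-term weights $1/(T-t+1)$, which is the single-token form you actually derived; your argument exposes this defect in the paper's proof, but as a proof of the proposition as written --- the form used to justify jointly predicting all masked tokens with full bi-directional context --- it does not close.
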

 
\begin{proof}
To prove the above proposition, we need to change permutation to combination as the context subsequence permutations are all equivalent due to the relative position encoding.
\begin{equation} \label{proof_01}
\begin{aligned}
    &\mathbb{E}_{\mathbf{z}\sim Z_T}\big{[} \log p_\theta(\mathbf{x}_{\mathbf{z}_{>c}}|\mathbf{x}_{\mathbf{z}_{\leq c}})\big{]} \\
    &=\frac{1}{T!} \sum_{\mathbf{z}\in Z_T}
        \sum_{t=c+1}^{|\mathbf{z}|}{\log p_\theta(x_{z_t}|\mathbf{x}_{\mathbf{z}_{<t}})} \\
    &=\frac{(T-t+1)!(t-1)!}{T!} \sum_{t=c+1}^{|\mathbf{z}|}
    \sum_{i=1}^{\mathrm{C}^{T-t+1}_{T}}
    \sum_{j=1}^{T-t+1}
    \log p(x_j|\mathbf{\hat{x}}_i)
\end{aligned}
\end{equation}
where $\mathbf{\hat{x}}_i$ stands for one possible combination of $t-1$ elements from the sequence. Let $\overline{\mathbf{x}}$ denotes the combination of all possible target elements $x_j$, then
\begin{equation} \label{proof_02}
\begin{aligned}
    &\mathbb{E}_{\mathbf{z}\sim Z_T}\big{[} \log p_\theta(\mathbf{x}_{\mathbf{z}_{>c}}|\mathbf{x}_{\mathbf{z}_{\leq c}})\big{]} \\
    &=\frac{1}{\mathrm{C}^{T-t+1}_{T}} \sum_{t=c+1}^{|\mathbf{z}|}
    \sum_{i=1}^{\mathrm{C}^{T-t+1}_{T}}
     \log p(\overline{\mathbf{x}}|\mathbf{\hat{x}}_i) \\
    &=\mathbb{E}_{|\overline{\mathbf{x}}|\sim [1, T-c]} \left[\log p_\theta(\overline{\mathbf{x}}|\hat{\mathbf{x}}) \right]
\end{aligned}
\end{equation}
where $\hat{\mathbf{x}}$ denotes one possible context subsequence combination with length $t\in[c,T-1]$.
\end{proof} 

Proposition \ref{equality_prop} provides a new sampling strategy without permutation. Firstly, we sample a context subsequence $\hat{\mathbf{x}}$, a combination of the sequence elements with length in $[c, T-1]$. Secondly, we predict all the rest elements independently, which forms the combination of the target subsequence $\overline{\mathbf{x}}$.
The above sampling strategy provides a stable context subsequence and a corresponding stable target subsequence given a specific input sequence.
As a result, we may employ the full bi-directional Transformer encoder networks to learn representations for context subsequences.
We call the corresponding model BERT-PLM, an alternative from BERT.
However, the difference between BERT-PLM and BERT is summaized as follows.
\begin{itemize}
\item The component $\log p_\theta(\overline{\mathbf{x}}|\hat{\mathbf{x}})$ in Eq.\ref{bert_plm_loss} is different from the masked language model objective function in BERT. $\hat{\mathbf{x}}$ in BERT stands for a corrupted sequence with manually added token ``[MASK]'', while in BERT-PLM it stands for a sample of context subsequence with no manually added token.
\item BERT-PLM employs relative positional encoding so that the permutations of one same context subsequence combination are all equivalent. 
\end{itemize}

When implementing BERT-PLM, the sampling of $\hat{\mathbf{x}}$ is done by simply masking the values from the target subsequence $\overline{\mathbf{x}}$ as shown in Figure~\ref{fig:bertpl}.
However, we do need the position information when predicting each individual $\overline{x}_{z_t}$, so we replace the input token embedding at ${z_t}$ with a learnable vector $w$, while fetching the output at ${z_t}$ from the last layer as the context representations for prediction. 
Since $\overline{\mathbf{x}}$ is masked constantly, the context representations at ${z_t}$ will never see the ground-truth $\overline{x}_{z_t}$, but will keep the position information according to the relative positional encoding.
Therefore, we can calculate the context representation with position information in one stream, which will save the computation resource comparing to the XLNET's two-stream implementation.
Moreover, as we employ full bi-directional Transformer encoder networks, BERT-PLM does not suffer from the incomplete context information problem of the XLNET implementation for the permutation language model.

\begin{figure}
    \centering
    \includegraphics[scale=0.8]{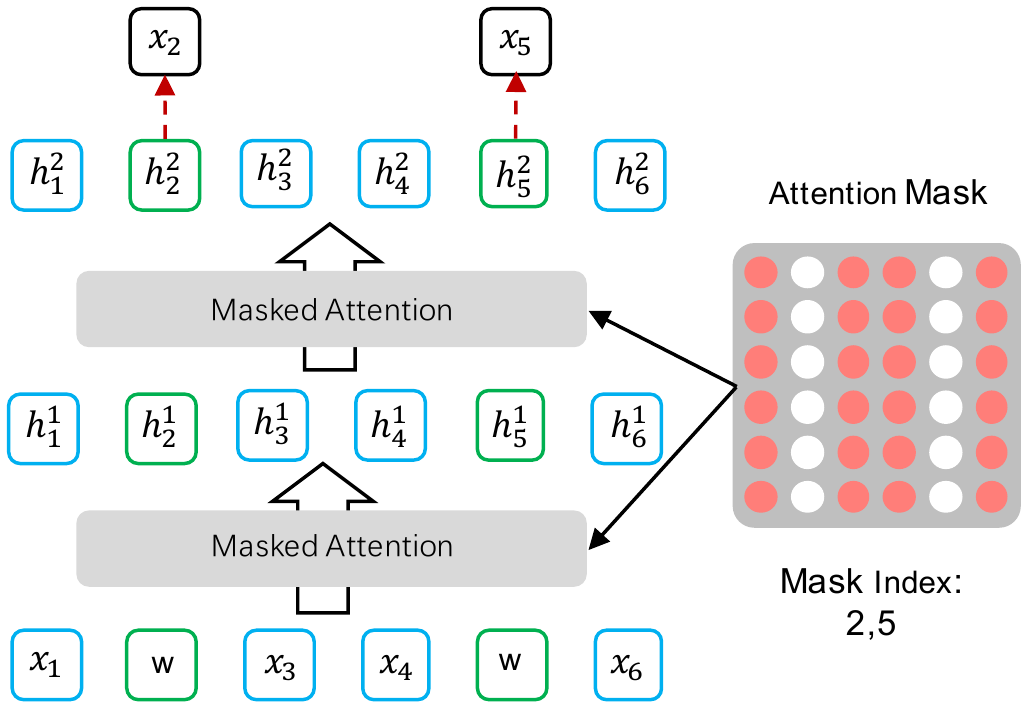}
    \caption{BERT-PLM: Masked Dot-Product Attention for Permutation Context Subsequence $\hat{\mathbf{x}}$ }
    \label{fig:bertpl}
\end{figure}

\subsubsection{Handling ``SIL''}
Different from text, the phoneme posterior sequence element $x_{z_t}$ is a distribution over the phoneme vocabulary.
When predicting $x_{z_t}$ through permutation language modeling, we still employ cross entropy loss to approximate the ground-truth distribution with the predicted one.
There is a special phoneme ``SIL'', which stands for a silence time slice.
There usually be multiple ``SIL'' between meaningful phonemes due to the interval silence during pronunciation.
On the other hand, the probability of ``SIL'' in a posterior distribution also reflects the pronunciation volume.
For example, a distribution of 0.5 ``SIL'' and 0.5 ``S'' means the ``S'' is pronounced with half of the max volume.
Predicting a major ``SIL'' time slice does not make sense at all, just like predicting spaces from an English sentence.
However, we cannot simple remove ``SIL'' since it models volume.
The existing of ``SIL'' makes the XLNET implementation fail to apply since the existing of ``SIL'' phonemes in the target subsequence corrupts the language modeling objective.
However, ``SIL'' is not a problem for BERT-PLM.
The BERT-PLM sampling strategy allows us to exclude the major ``SIL'' time slices from the target subsequence $\overline{x}_{z_t}$.

\subsubsection{Unsupervised Pre-training and Finetuning}
The SLU component is pre-trained on the phoneme posterior output of the pre-trained acoustic model component from unlabeled audio corpus via the permutation language model objective depicted above.
The pre-trained parameter weights are further finetuned in the task specific end-to-end model.
The acoustic model component parameters are frozen during the finetuning.
We premutation language modeling loss for pre-training is still used during the finetuning process together with the target loss.
Therefore, the target subsequence that is masked from the phoneme posterior sequence is hidden during the finetuning, which can be regarded as input drop-out regularization.
For the classification tasks, we further employ a pooling network that proposed by \citet{Lin:2017} after the BERT-PLM networks. Instead of taking the embedding of the start flag as the output vector, the pooling network use the hidden matrix to learn the representation via a single-head attention operation, as shown in Figure \ref{fig:pooling}. The $Q$ in the figure is a trainable vector and randomly initialized.

\begin{figure}
    \centering
    \includegraphics[scale=0.8]{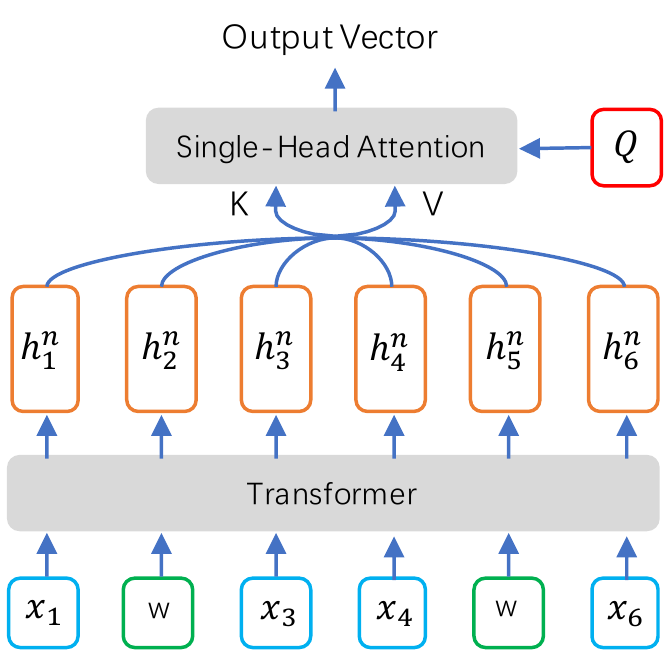}
    \caption{Pooling Network}
    \label{fig:pooling}
\end{figure}

\section{Experiment}
\label{sec:experiment}

In this section, we evaluate the effectiveness of our unsupervised pre-training approach by comparing to various baseline approaches.
We also conduct experiments on different model setting to see the influence of pre-training.
We choose two datasets on different langauges: English and Mandarin, to test the robustness of our approach.

\subsection{Data Collection}
The dataset used in this work consists of three parts, labeled ASR dataset for training the acoustic model component, large scale audio clips for unsupervised phoneme-level pretraining of the SLU component and the end-to-end labeled SLU dataset.
We utilize the Libri Speech\citep{panayotov2015librispeech} dataset and a in house labeled ASR dataset for two implementations of the acoustic model component, respectively, as shown in Table.\ref{tab:asr_dataset}.

\begin{table}
\caption{Datasets for AM training }
\label{tab:asr_dataset}       
\centering
\begin{tabular}{lcc}
\hline\noalign{\smallskip}
AM Impl & Dataset & Hours  \\
\noalign{\smallskip}\hline\noalign{\smallskip}
SyncNet & Libri Speech & 1000 \\
DFSMN & In-House ASR Dataset & 20000 \\
\noalign{\smallskip}\hline
\end{tabular}
\end{table}

For SLU component pre-training, we use 4000+ hours of audio clips from several public datasets, including Common Voice, Libri Speech,  VoxCeleb1\&2\citep{nagrani2017voxceleb} and the Free Spoken Digit Dataset. We only use the audio clips to pretrain the SLU component, which later will be used for fine-tuning. 
Statistics of the dataset for pre-training is shown in Table.\ref{tab:pretrain_dataset}

As for the SLU datasets, we use Fluent Speech Commands \citep{Lugosch1:2019} and a self-build In-House Speech Commands dataset. The Fluent Speech Commands dataset is public, which contains approximately 30k audio-command pairs in English. The In-House dataset contains about 20k audio-command pairs of near field annotated data collected from a diverse set of more than 1000 speakers. See Table.\ref{tab:slu_dataset} for details, where \#C denotes the number of speech command type.

\begin{table}
\caption{The Statistic of Pretrain-Datasets}
\label{tab:pretrain_dataset}       
\centering
\begin{tabular}{lcc}
\hline\noalign{\smallskip}
Dataset & Hours  \\
\noalign{\smallskip}\hline\noalign{\smallskip}
Vox-Celeb & 2,000h  \\
Libri Speech & 1,000h \\
Common Voice & 1,087h \\
Free Spoken Digit & - \\
\noalign{\smallskip}\hline
\end{tabular}
\end{table}

\begin{table}
\caption{The Statistic of SLU-Datasets}
\label{tab:slu_dataset}       
\centering
\begin{tabular}{lcccc}
\hline\noalign{\smallskip}
Dataset & \#C & Train & Test & Valid  \\
\noalign{\smallskip}\hline\noalign{\smallskip}
Fluent Speech Command & 31 & 23.1k & 3.8k & 3.1k\\
In-House & 16 & 20k & 2k & -\\
\noalign{\smallskip}\hline
\end{tabular}
\end{table}

\subsection{Baselines}
\textbf{ASR + NLU pipeline}. To verify the effectiveness of the end-to-end models on Fluent Speech Command dataset, we implement an ASR + NLU pipeline as one of our baselines. As we train the acoustic model with phoneme-level and word-level target loss together, naturally we have an ASR model that is trained with the same labeled ASR dataset. As for the NLU model, we still adopt the Transformer architecture by applying a lookup table over the word sequences. We denote this approach by ASR+NLU in Table.\ref{tab:fsc} and Table.\ref{tab:ih}. 

\noindent\textbf{Non-Pretraining}. The end-to-end approach proposed in this paper without the SLU component being pretrained with audio data.

\noindent\textbf{SOTA}. The state-of-the-art model on Fluent Speech Command dataset, which is proposed by \citep{Lugosch1:2019}. It is an end-to-end model whose ASR module is pre-trained with Libri Speech dataset. The differences between the Non-Pretraining approach and the SOTA approach lies in two ways: 1) Classifier. The SOTA approch adopt a bidirectional-RNN encoder followed by max-pooling to squash the sequence of outputs from the recurrent layer into a single vector of logits corresponding to the different slot values. While we adopt a Transformer architecture followed by a full connection layer for classification. 2) Inputs of the classifier. The SOTA approach takes the word-level hidden representation as input while our approach takes the phoneme posterior as inputs.



\subsection{Experiment Settings}
\textbf{Hyper-parameters}. The BERT-PLM are implemented with 4 layers of multi-head self-attention, the phoneme embedding size is set to 576, intermediate size set to 1600 and number of heads of self-attention set to 8. The learning rate is set to $3*10^{-5}$, the dropout rate is set to 0.1. Typically a sampled audio slice is 30 milliseconds, we set the max input sequence length to be 320 which is long enough to hold audio clips no more than 10 seconds. The pretrained model converges within 10 epoches of training.

\noindent\textbf{Environment}. All models are implemented with Tensorflow 1.12\citep{abadi2016tensorflow}. We conduct the pre-training experiment on 8 Nvidia-Tesla P-100 GPUs for approximately 7 hours. 

\subsection{Experiment Results}
We compare different approaches on two SLU datasets mentioned above. 
As can be seen from Table.\ref{tab:fsc}, given the same ASR dataset and fine-tuning dataset, end-to-end approaches, including the state-of-the-art and Non-Pretraining, significantly outperform the ASR + NLU pipeline by a large margin, meaning that the ASR component suffers from more information loss compared to the acoustic model component as the downstream Transformer architecture for Non-Pretraining and ASR+NLU are the same. 
In other words, the end-to-end models can efficiently alleviate the ASR error problem in ASR+NLU pipeline. 

\begin{table} [h]
\caption{Comparison of error rate between different approaches on the Fluent Speech Command dataset}
\label{tab:fsc}       
\centering
\begin{tabular}{lrr}
\hline\noalign{\smallskip}
Model & Error Rate & Delta  \\
\noalign{\smallskip}\hline\noalign{\smallskip}
SOTA & $1.2\%$ & $-$ \\
ASR+NLU & $9.89\%$ & $-724.17\%$ \\
Non-Pretraining & $1.95\%$ & $-62.50\%$ \\
BERT-PLM & $\mathbf{1.05\%}$ & $\mathbf{12.5\%}$ \\
\noalign{\smallskip}\hline
\end{tabular}
\end{table}

The state-of-the-art approach outperforms the Non-Pretraining approach because the state-of-the-art approach utilize the word-level hidden representation which carries more semantic information in comparison to the Non-Pretraining approach which utilize the phoneme posterior distribution without pre-training.
However, BERT-PLM significantly outperform the state-of-the-art approach as well as the Non-Pretraining approach, indicating that large-scale unsupervised pre-training can help the SLU component capture more semantic information. 
On the other hand, our claim in previous section that the SLU component is independent from the acoustic model implementation conditioned on the phoneme posterior has been proved to be true. 
From the experiment results, we observe that the proposed end-to-end approach benefits from SLU component pre-training regardless of implementation of the acoustic model as we adopt different acoustic models for different task while keeping the SLU component the same.

To be noted that, BERT-PLM yield the highest performance which reduces the error rate by 12.5\% over the state-of-the-art approach on Fluent Speech Command dataset. 
Due to the fact that the existence of "SIL" brings trouble to our implementation of XLNET, we omit the result of XLNET in our experiment result.


\begin{table} [h]
\caption{Comparison of F1 score between different approaches on the In-House Speech Command dataset}
\label{tab:ih}       
\centering
\begin{tabular}{lcc}
\hline\noalign{\smallskip}
Model & Macro-F1 & Micro-F1\\
\noalign{\smallskip}\hline\noalign{\smallskip}
Non-Pretraining & $78.30\%$ & $89.38\%$ \\
BERT-PLM & $\mathbf{79.36\%}$ & $\mathbf{90.45\%}$ \\
\noalign{\smallskip}\hline
\end{tabular}
\end{table}

\begin{table} [h]
\caption{The results of different percentage of the number of mask}
\label{tab:var_mask}       
\centering
\begin{tabular}{lcc}
\hline\noalign{\smallskip}
Model & Error Rate & Delta\\
\noalign{\smallskip}\hline\noalign{\smallskip}
SOTA & $1.20\%$ & -\\
BERT-PLM(5\%) & $1.11\%$  & $7.5\%$ \\
BERT-PLM(10\%) & $1.11\%$ & $7.5\%$ \\
BERT-PLM(15\%) & $\mathbf{1.05\%}$ & $\mathbf{12.5\%}$ \\
BERT-PLM(20\%) & $1.13\%$ & $5.8\%$ \\
\noalign{\smallskip}\hline
\end{tabular}
\end{table}

\begin{figure}[h]
	\centering
	\includegraphics[scale=0.53]{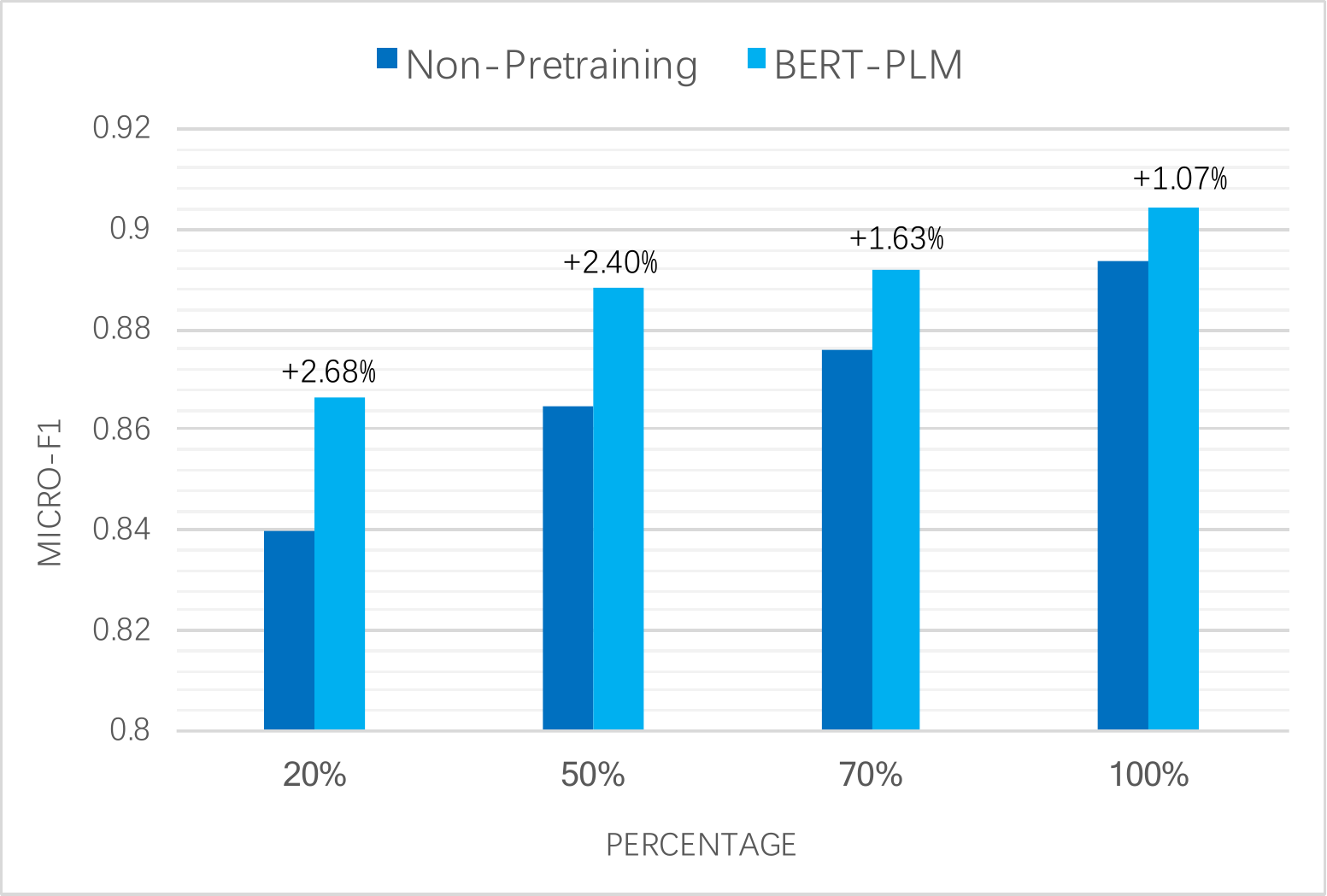}
	\caption{
		\label{fig:percentage} The F1 score of BERT-PLM Pre-training on variant percentage of in-house training data.}
\end{figure}

Next, we show the results on different percentage of the SLU training data. We use the in-house data to show some insights. We select the 20\%, 50\%, 70\% and 100\% percent out of the whole in-house dataset as training data for fine-tuning. The results are shown in Figure.\ref{fig:percentage}. From the results, we can see that the benefits from pre-training is decreased as the training data increases. This also illustrate that the semantic information can be directly learned from the labeled training data if the number of the training data is large.

We also select different maximum percentage of frames to be masked. Separately, up to 5\%, 10\%, 15\%, 20\% of the total non-sil frames of each instance are selected as target frames during pre-training. The result of fine-tuning based on the different pre-trained models are shown in Table.\ref{tab:var_mask}. We observe the best performance when we set the mask percentage upper bound to $15\%$. When the mask percentage upper bound is set to $5\%$ and $10\%$, the pre-training task is too easy for the model so that it may over-fit, but we still observe performance gain from pre-training in comparison to the state-of-the-art approach. When the mask percentage upper bound is set to $20\%$, the pre-training objective is hard to learn so that less performance gain is observed.




\section{Conclusion}
End-to-end SLU approaches have attracted much attention since the semantic features are directly learned from audio features without intermediate text representation. 
In this paper, we propose to do large-scale unsupervised pre-training for the SLU component of an end-to-end SLU system for the first time.
As a result, the SLU component may preserve semantic features from massive unlabeled data.
We employ an regression objective for pre-training which is equivalent to the partial permutation language modeling objective.
To handle the characteristic of the phoneme posterior outputs from the acoustic model component, we propose BERT-PLM, a novel pre-training model.
BERT-PLM leverages full bi-directional context information with BERT networks.
The experiment results show the effectiveness of our approach, which out-perform the state-of-the-art end-to-end systems with over 12.5\% error reduction.

\bibliography{aaai}
\bibliographystyle{aaai}

\end{document}